\documentclass[11pt]{article}

\usepackage[margin=1in]{geometry}
\usepackage{amsmath,amssymb,amsthm,mathtools}
\usepackage{enumitem}
\usepackage{hyperref}

\hypersetup{
  colorlinks=true,
  linkcolor=blue,
  citecolor=blue,
  urlcolor=blue
}

\title{Monotonicity of Normalized Implied-Volatility Coordinates under No-Arbitrage}
\author{Jian Sun\\FTI Consulting\\jiansun@uchicago.edu}
\date{}

\numberwithin{equation}{section}

\newtheorem{theorem}{Theorem}[section]
\newtheorem{lemma}[theorem]{Lemma}
\newtheorem{proposition}[theorem]{Proposition}
\theoremstyle{remark}
\newtheorem{remark}[theorem]{Remark}

\begin{document}

\maketitle

\begin{abstract}
For a fixed maturity, an arbitrage-free option smile induces natural normalized
strike coordinates.  In the Black--Scholes setting these coordinates are closely
related to the normalizing transformations introduced by Fukasawa, equivalently
the negatives of the Black--Scholes quantities \(d_1\) and \(d_2\).  We make two
contributions.  First, we give an elementary no-arbitrage proof of monotonicity
for the central normalized coordinate \(k/v(k)\), based only on finite-strike
comparisons, convexity, monotonicity, and put--call parity.  Thus the main order
arguments are discrete: they apply directly to a finite set of quoted strikes and
do not require a continuously quoted smile, differentiability of option prices,
or differentiability of implied volatility.  Second, we prove a model-free
normal-variance identity in Bachelier implied-volatility coordinates.  Fukasawa's
lognormal identity expresses the remaining variance as an integral of implied
variance against the standard normal density after applying his normalized
coordinates.  We establish the corresponding normal identity: the expected
remaining normal variance equals the normal-density weighted integral of squared
Bachelier implied volatilities in the Bachelier normalized coordinate.  This
places the normal result alongside Fukasawa's lognormal transformation and the
variance-swap identities in the volatility-derivatives literature surveyed by
Carr and Lee.
\end{abstract}

\section{Introduction}

A fixed-maturity option smile is not an arbitrary function of strike.  Static
absence of arbitrage imposes monotonicity and convexity restrictions on call and
put prices, and these restrictions translate into nontrivial constraints on
implied volatility.  Such constraints are important both theoretically and
practically: they underlie interpolation of option surfaces, arbitrage detection,
and model-free extraction of variance-like quantities from option prices.

This paper studies one simple but useful consequence of no-arbitrage: the
monotonicity of normalized implied-volatility coordinates.  In the lognormal
case, let \(F\) be the forward price, let
\[
  k=\log(K/F), \qquad v(k)
\]
be the total Black--Scholes implied volatility, and define
\[
  f_1(k)=\frac{k}{v(k)}-\frac{v(k)}{2}, \qquad
  f_2(k)=\frac{k}{v(k)}+\frac{v(k)}{2}.
\]
These are precisely the normalizing transformations studied by
Fukasawa~\cite{Fukasawa2012}.  Since the Black--Scholes quantities are
\[
  d_1(k)=-\frac{k}{v(k)}+\frac{v(k)}{2}, \qquad
  d_2(k)=-\frac{k}{v(k)}-\frac{v(k)}{2},
\]
the monotonicity of \(f_1\) and \(f_2\) is equivalent to the monotonicity of
\(-d_1\) and \(-d_2\), or equivalently to the fact that \(d_1\) and \(d_2\)
decrease with strike.

Fukasawa's normalizing transformation paper is the closest reference to the
present note.  Fukasawa introduced nonlinear transforms of the implied volatility
smile and showed that, under no-arbitrage, they have strong monotonicity
properties.  He also used these transforms to obtain model-free skew bounds and
to express prices of European payoffs, including fair strikes of variance and
gamma swaps, in terms of the implied volatility smile.  The proof strategy in
Fukasawa is naturally formulated in a continuous-strike framework: the smile is
treated as a function on a continuum of log-strikes, and the key arguments use
regularity of the strike-price or implied-volatility functions.  This is the
right framework for deriving integral payoff formulas and skew bounds, but it is
not the way market smiles are observed.

The contribution of the present note is twofold.  The first contribution is to
isolate an elementary finite-strike order mechanism.  The main proofs compare
two strikes at a time and use only the static no-arbitrage inequalities visible
on a quoted option chain.  In particular, the arguments do not assume a
continuously quoted smile, nor do they need differentiability of call prices,
put prices, or implied volatilities.  This makes the result directly applicable
to discrete market quotes, where one first checks monotonicity and convexity of
the quoted option prices and then obtains monotonicity of the corresponding
normalized coordinates.  The same finite-strike argument also gives a direct
normal, or Bachelier, analogue.

The second contribution is a model-free normal-variance identity.  Fukasawa's
lognormal transformation leads to a striking identity: the remaining variance can
be represented as an integral of Black implied variance against the standard
normal density after changing variables to the normalized coordinate.  We prove
the corresponding identity under the Bachelier convention.  If
\(x=(F-K)/\sigma_N(K)\) denotes the Bachelier normalized coordinate, then the
expected remaining normal variance can be written as a normal-density weighted
average of \(\sigma_N(x)^2\).  This provides a normal-model counterpart to
Fukasawa's lognormal variance identity and gives an additional link between the
normalizing-coordinate viewpoint and the model-free variance-swap literature.

The broader literature is extensive.  The representation of risk-neutral
distributions through call-price convexity goes back to
Breeden and Litzenberger~\cite{BreedenLitzenberger1978}.  The Black--Scholes
and Bachelier formulas are classical references for the lognormal and normal
implied-volatility conventions~\cite{BlackScholes1973,Bachelier1900}.  The
replication and static-representation viewpoint for variance-type claims is
closely related to the log-contract and variance-swap literature of
Neuberger~\cite{Neuberger1994}, Dupire~\cite{Dupire1993},
Carr and Madan~\cite{CarrMadan1998}, and Demeterfi, Derman, Kamal and
Zou~\cite{DemeterfiDermanKamalZou1999}.  Carr and Lee~\cite{CarrLee2009}
provide a useful review of volatility derivatives, including variance swaps,
volatility swaps, VIX products, and the weighted-average representation of the
variance swap rate in implied-volatility coordinates.  Their review is also a
convenient source for many of the standard references in the volatility-
derivatives literature.  Modern arbitrage-free parametrizations of volatility
smiles include the SVI framework and its arbitrage-free variants
~\cite{Gatheral2006,GatheralJacquier2014}.  Related no-arbitrage inequalities
for Black--Scholes implied volatility appear in Tehranchi~\cite{Tehranchi2020}.
Fukasawa's transformation remains a central reference for the specific
normalized coordinates considered in this paper.

\section{No-Arbitrage Setup}

We work on a single maturity \(T\).  To avoid inessential notation, we use
forward prices and set the discount factor to one.  Thus \(F>0\) denotes the
forward price of the underlying at maturity \(T\).  Let \(C(K)\) and \(P(K)\)
denote undiscounted European call and put prices with strike \(K\).

For the lognormal model we take \(K>0\).  For the normal model it is natural to
allow \(K\in\mathbb{R}\), since the Bachelier terminal distribution has support
on the real line.

We assume the standard static no-arbitrage conditions:
\begin{enumerate}[label=(A\arabic*)]
  \item \(C(K)\) is decreasing and convex in \(K\).
  \item \(P(K)\) is increasing and convex in \(K\).
  \item Put--call parity holds:
  \[
    C(K)-P(K)=F-K .
  \]
  \item The boundary conditions are consistent with absence of static arbitrage.
\end{enumerate}
In the lognormal case the relevant boundary conditions are
\[
  C(0)=F,\qquad \lim_{K\to\infty} C(K)=0.
\]
In the normal case on the full real strike line, the corresponding limits are
\[
  \lim_{K\to-\infty}\{C(K)-(F-K)\}=0,\qquad
  \lim_{K\to\infty} C(K)=0.
\]

We write \(\phi\) and \(\Phi\) for the standard normal density and distribution
function.

\begin{remark}[Discrete market interpretation]
Although the notation below is written in functional form, the main order
results should be read discretely.  Given quoted strikes \(K_1<\cdots<K_n\),
one only needs the finite no-arbitrage inequalities on this grid: call prices
must be decreasing and convex on the grid, put prices must be increasing and
convex on the grid, and put--call parity must hold at the quoted strikes.  The
proofs compare two strikes at a time and therefore do not require interpolation
between quoted strikes, existence of a strike density, or differentiability with
respect to strike.  Continuous-strike notation is used only for convenience and
for comparison with the classical literature.
\end{remark}

\begin{lemma}[Monotonicity of normalized put price]
\label{lem:put-over-k}
In the lognormal setting, \(K\mapsto P(K)/K\) is increasing on \(K>0\).
\end{lemma}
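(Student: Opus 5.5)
The plan is to prove the statement by a two-strike comparison that uses only convexity of \(P\) and the boundary value \(P(0)=0\). No differentiability of \(P\) is needed. The boundary value comes from put--call parity at \(K=0\): since \(C(0)=F\), we get \(P(0)=C(0)-(F-0)=0\). Here \(P(0)\) is read as the right limit of \(P\) at \(0\). For a convex function this limit exists, and it is consistent with the boundary conditions in (A4).

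Fix strikes \(0<K_1<K_2\). Write \(K_1\) as a convex combination of \(0\) and \(K_2\):
\[
  K_1=\Bigl(1-\tfrac{K_1}{K_2}\Bigr)\cdot 0+\tfrac{K_1}{K_2}\,K_2 .
\]
Convexity of \(P\) from (A2) then gives
\[
  P(K_1)\le \Bigl(1-\tfrac{K_1}{K_2}\Bigr)P(0)+\tfrac{K_1}{K_2}P(K_2)=\tfrac{K_1}{K_2}P(K_2).
\]
Dividing by \(K_1>0\) yields \(P(K_1)/K_1\le P(K_2)/K_2\), which is the claimed monotonicity. Geometrically, \(P(K)/K\) is the slope of the chord from the origin to \((K,P(K))\). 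Chord slopes from a fixed point of a convex function are nondecreasing.

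The discrete version works the same way and fits the paper's market interpretation. The strike \(0\) is not quoted, but the inequality \(P(K_1)\le \tfrac{K_1}{K_2}P(K_2)\) is a static no-arbitrage condition on its own. It says that \(K_1/K_2\) units of the \(K_2\)-put dominate one \(K_1\)-put. To see this, compare payoffs pointwise:
\[
  \tfrac{K_1}{K_2}(K_2-S)^+=(K_1-\tfrac{K_1}{K_2}S)^+\ge (K_1-S)^+ \qquad\text{for } S\ge 0.
\]
So on a grid one may either take the inequality as part of the grid no-arbitrage conditions or derive it from convexity extended to the boundary point \(0\).

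The main obstacle is justifying the use of the endpoint \(0\), namely that \(P(0)=0\) and that convexity holds on \([0,K_2]\) rather than only on \((0,\infty)\). I would handle this with the limiting argument above: take \(\varepsilon\downarrow 0\) in the convexity inequality on \([\varepsilon,K_2]\) and use \(P(\varepsilon)\to0\). That limit follows from parity together with \(C(\varepsilon)\to C(0)=F\). If this continuity is not granted, I would fall back on the payoff-domination argument, which gives the inequality directly.
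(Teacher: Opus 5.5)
Your argument is correct and is the same as the paper's proof: you write \(K_1\) as a convex combination of \(0\) and \(K_2\), apply convexity of \(P\) with \(P(0)=0\), and divide by \(K_1\). Your extra justification of \(P(0)=0\) from parity and \(C(0)=F\), and the payoff-domination check \(\tfrac{K_1}{K_2}(K_2-S)^+\ge (K_1-S)^+\) for \(S\ge 0\), are sound supplements that the paper leaves implicit.
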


\begin{proof}
Since \(P\) is convex and \(P(0)=0\), for \(0<K_1<K_2\) convexity gives
\[
  P(K_1)\le \frac{K_1}{K_2}P(K_2)+\left(1-\frac{K_1}{K_2}\right)P(0)
  =\frac{K_1}{K_2}P(K_2).
\]
Hence \(P(K_1)/K_1\le P(K_2)/K_2\).
\end{proof}

\section{The Normal, or Bachelier, Case}

This section proves the monotonicity result for the normal, or Bachelier,
implied-volatility coordinate.  The goal is to show that, under static
no-arbitrage, the normalized coordinate
\[
  x(K)=\frac{F-K}{\sigma_N(K)}
\]
is strictly decreasing in strike, or equivalently that
\((K-F)/\sigma_N(K)\) is strictly increasing.  The proof is deliberately
finite-strike: it compares two strikes at a time and uses only monotonicity of
quoted call and put prices together with elementary monotonicity properties of
the Bachelier pricing functions.  After proving the two-strike monotonicity
statement, we record the corresponding range property on the full real strike
line.

For a Bachelier implied volatility \(\sigma>0\), define
\[
  x=\frac{F-K}{\sigma}.
\]
The Bachelier call and put prices are
\begin{align}
  C_N(K,\sigma)
    &=\sigma\{\phi(x)+x\Phi(x)\}, \label{eq:bachelier-call}\\
  P_N(K,\sigma)
    &=\sigma\{\phi(x)-x\Phi(-x)\}. \label{eq:bachelier-put}
\end{align}

\begin{lemma}
\label{lem:bach-monotone}
The function
\[
  h(x)=\phi(x)+x\Phi(x)
\]
is strictly increasing in \(x\), and
\[
  \bar h(x)=\phi(x)-x\Phi(-x)
\]
is strictly decreasing in \(x\).
\end{lemma}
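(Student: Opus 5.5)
The plan is to differentiate both functions directly, using the identity \(\phi'(x)=-x\phi(x)\). For \(h\) we get
\[
  h'(x)=\phi'(x)+\Phi(x)+x\phi(x)=-x\phi(x)+\Phi(x)+x\phi(x)=\Phi(x).
\]
Since \(\Phi(x)>0\) for every real \(x\), \(h\) is strictly increasing on \(\mathbb{R}\).

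For \(\bar h\), I will first observe that \(\bar h(x)=h(-x)\). Indeed, \(\phi\) is even, so
\[
  h(-x)=\phi(-x)-x\Phi(-x)=\phi(x)-x\Phi(-x)=\bar h(x).
\]
Strict decrease of \(\bar h\) then follows at once from strict increase of \(h\). Alternatively, I can differentiate directly: \(\bar h'(x)=-x\phi(x)-\Phi(-x)+x\phi(x)=-\Phi(-x)<0\). Either route is immediate. The identity \(\bar h(x)=h(-x)\) is also just put--call parity in normalized form, since \(h(x)-\bar h(x)=x\Phi(x)+x\Phi(-x)=x\).

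There is no real obstacle here. The only point needing care is that the inequalities are \emph{strict}, and this holds because \(\Phi\) and \(\Phi(-\cdot)\) are strictly positive on all of \(\mathbb{R}\), so the derivatives never vanish.

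If one prefers a derivative-free argument, in keeping with the paper's finite-strike philosophy, there is a probabilistic version. Write \(h(x)=\mathbb{E}[(x+Z)^+]\) with \(Z\sim N(0,1)\). Then for \(x_1<x_2\),
\[
  h(x_2)-h(x_1)=\mathbb{E}\bigl[(x_2+Z)^+-(x_1+Z)^+\bigr]\ge (x_2-x_1)\,\mathbb{P}(Z>-x_1)>0,
\]
which gives strict increase. The put case is handled in the same way using \(\bar h(x)=\mathbb{E}[(-x-Z)^+]\).
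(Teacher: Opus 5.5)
Your proof is correct and matches the paper's argument, which differentiates directly to get \(h'(x)=\Phi(x)>0\) and \(\bar h'(x)=-\Phi(-x)<0\). Your symmetry observation \(\bar h(x)=h(-x)\) and the probabilistic representation \(h(x)=\mathbb{E}[(x+Z)^+]\) are also valid, but they are optional extras rather than a different route.
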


\begin{proof}
Direct differentiation gives
\[
  h'(x)=\Phi(x)>0,\qquad \bar h'(x)=-\Phi(-x)<0.
\]
\end{proof}

\begin{theorem}[Bachelier normalized coordinate]
\label{thm:bachelier-coordinate}
Let \(K_1<K_2\), and let \(\sigma_i=\sigma(K_i)\) be the Bachelier implied
volatility at \(K_i\).  Then
\[
  \frac{K_1-F}{\sigma_1}
  <
  \frac{K_2-F}{\sigma_2}.
\]
Equivalently, \(x(K)=(F-K)/\sigma(K)\) is strictly decreasing in \(K\).
\end{theorem}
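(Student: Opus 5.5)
We argue by contradiction, comparing the two strikes directly. Write $x_i=(F-K_i)/\sigma_i$, so the claim is $x_1>x_2$. Suppose instead that $x_1\le x_2$. We will show that this forces the quoted prices to violate the strict form of monotonicity in (A1)--(A2). The natural split is by the sign of $x_1$, or equivalently by which side of the forward the relevant strike lies on. In each case we use whichever of the two Bachelier representations \eqref{eq:bachelier-call}--\eqref{eq:bachelier-put} lets the scale factor $\sigma_i$ be compared in the right direction.

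\emph{Case $x_2\ge x_1$ with $x_1\ge 0$.} Then $F-K_1\ge0$, and since $K_2>K_1$ we have $F-K_2<F-K_1$. From $x_2\ge x_1\ge 0$ and $x_2\sigma_2=F-K_2<F-K_1=x_1\sigma_1$, we get $\sigma_2<\sigma_1$ whenever $x_2>0$. In fact $x_2>0$ here: $x_2\ge x_1\ge0$, and $x_2=0$ would force $x_1=0$ and hence $K_1=K_2=F$. Now use puts:
\[
P(K_2)=\sigma_2\bar h(x_2)\le \sigma_2\bar h(x_1)<\sigma_1\bar h(x_1)=P(K_1),
\]
by Lemma~\ref{lem:bach-monotone} and $\bar h>0$. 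This contradicts $P$ being increasing.

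\emph{Case $x_1<0$.} Then $K_1>F$, so $K_2-F>K_1-F>0$, and $x_2\ge x_1$ means $(K_2-F)/\sigma_2\le(K_1-F)/\sigma_1$, which gives $\sigma_2>\sigma_1$ whenever $x_2<0$. If $x_2\ge0$, then $K_2\le F<K_1$, which is impossible. Now use calls:
\[
C(K_2)=\sigma_2 h(x_2)\ge\sigma_2 h(x_1)>\sigma_1 h(x_1)=C(K_1),
\]
since $h>0$. This contradicts $C$ being decreasing. The mixed situation is already covered, because $x_1\ge0$ and $x_1<0$ exhaust the possibilities.

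The main obstacle is strictness. The argument yields $P(K_2)<P(K_1)$ in the first case and $C(K_2)>C(K_1)$ in the second, so even \emph{weak} monotonicity of prices gives a contradiction, and the strict conclusion follows. Some care is needed at the boundary $x_2=0$, which was handled inside the first case. Implicitly we also need $\sigma_i>0$ to exist, i.e. prices strictly above intrinsic value, which is the standing assumption whenever $\sigma(K)$ is defined. Only the two quoted strikes are used, so the argument is finite-strike, as the paper advertises.
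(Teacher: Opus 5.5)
Your proof is correct and is essentially the paper's argument: you assume \(x_1\le x_2\) and derive a violation of put monotonicity (via the decreasing \(\bar h\)) when the strikes lie at or below the forward, and a violation of call monotonicity (via the increasing \(h\)) when they lie above it. Your bookkeeping is slightly tighter than the paper's. You extract the strict volatility ordering from the contradiction hypothesis and use \(h,\bar h>0\), which gives a strict price violation, so weak monotonicity in (A1)--(A2) suffices where the paper invokes strict monotonicity; and splitting on the sign of \(x_1\) also covers the cases \(K_1=F\) or \(K_2=F\) that the paper's three strike cases omit.
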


\begin{proof}
It is enough to prove that \(x_1>x_2\), where
\[
  x_i=\frac{F-K_i}{\sigma_i}.
\]
We distinguish three cases.

First suppose \(K_1<F<K_2\).  Then \(x_1>0>x_2\), so the result is immediate.

Next suppose \(F<K_1<K_2\).  Then \(x_1,x_2<0\).  If \(\sigma_2<\sigma_1\),
then
\[
  \frac{K_1-F}{\sigma_1}<\frac{K_2-F}{\sigma_2},
\]
which is the desired result.  If \(\sigma_2\ge \sigma_1\), no-arbitrage gives
\(C(K_1)>C(K_2)\), while the Bachelier formula gives
\[
  C(K_i)=\sigma_i h(x_i).
\]
If \(x_1\le x_2\), then \(h(x_1)\le h(x_2)\), and hence
\[
  C(K_1)=\sigma_1h(x_1)\le \sigma_2h(x_2)=C(K_2),
\]
a contradiction.  Hence \(x_1>x_2\).

Finally suppose \(K_1<K_2<F\).  Then \(x_1,x_2>0\).  If
\(\sigma_1<\sigma_2\), the desired inequality is immediate.  If
\(\sigma_1\ge \sigma_2\), no-arbitrage gives \(P(K_1)<P(K_2)\), and
\[
  P(K_i)=\sigma_i\bar h(x_i).
\]
If \(x_1\le x_2\), then \(\bar h(x_1)\ge \bar h(x_2)\), and therefore
\[
  P(K_1)=\sigma_1\bar h(x_1)\ge \sigma_2\bar h(x_2)=P(K_2),
\]
again a contradiction.  Thus \(x_1>x_2\).
\end{proof}

\begin{remark}
The preceding proof is purely order-theoretic and finite-strike.  It uses only
monotonicity of quoted option prices in strike and monotonicity of the elementary
Bachelier pricing functions.  It does not differentiate the option-price curve
or the implied-volatility smile, and it does not require option quotes between
\(K_1\) and \(K_2\).
\end{remark}

\begin{proposition}[Range of the Bachelier coordinate]
\label{prop:range-normal}
Assume the normal smile is defined on the full real strike line and satisfies
the usual no-arbitrage boundary conditions.  Then
\[
  \lim_{K\to-\infty}\frac{F-K}{\sigma(K)}=+\infty,
  \qquad
  \lim_{K\to+\infty}\frac{F-K}{\sigma(K)}=-\infty .
\]
\end{proposition}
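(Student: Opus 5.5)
We need to prove that the Bachelier coordinate tends to \(-\infty\) as \(K\to+\infty\) and to \(+\infty\) as \(K\to-\infty\). By put--call parity the two limits are symmetric, so I treat \(K\to+\infty\) in detail with calls, then repeat the argument for \(K\to-\infty\) with puts.

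Fix \(K>F\) and write \(x(K)=(F-K)/\sigma(K)<0\). The Bachelier formula \eqref{eq:bachelier-call} gives
\[
  C(K)=\sigma(K)\,h(x(K))=\frac{K-F}{|x(K)|}\,h(x(K)).
\]
Define \(g(x)=h(x)/|x|\) for \(x<0\). As \(x\to-\infty\), the Mills-ratio asymptotics give \(h(x)=\phi(x)-|x|\Phi(-|x|)\sim\phi(x)/x^2\), so \(g(x)\to0\). As \(x\to0^-\), \(g(x)\to+\infty\) because \(h(0)=\phi(0)>0\). Moreover
\[
  \frac{d}{dx}\Big(\frac{h(x)}{-x}\Big)=\frac{-x\Phi(x)+h(x)}{x^2}=\frac{\phi(x)}{x^2}>0 .
\]
Hence \(g\) is a strictly increasing bijection of \((-\infty,0)\) onto \((0,\infty)\), and the identity above reads
\[
  g(x(K))=\frac{C(K)}{K-F}.
\]

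The key step is to show that \(C(K)/(K-F)\to0\) as \(K\to\infty\). This follows from the boundary condition \(\lim_{K\to\infty}C(K)=0\), since the denominator \(K-F\) tends to \(+\infty\). Because \(g\) is an increasing bijection onto \((0,\infty)\), its inverse is continuous and sends values tending to \(0^+\) to points tending to \(-\infty\). Therefore \(x(K)=g^{-1}\big(C(K)/(K-F)\big)\to-\infty\). Theorem~\ref{thm:bachelier-coordinate} already makes \(x\) monotone, so the limit exists in the extended sense in any case; the argument above identifies its value.

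For \(K\to-\infty\), I work with \(F-K>0\) and \(x(K)>0\), and use the put formula \eqref{eq:bachelier-put}:
\[
  \frac{P(K)}{F-K}=\frac{\bar h(x)}{x}.
\]
Since \(\bar h(x)=h(-x)\), this is the mirror image of the call case, so \(x\mapsto \bar h(x)/x\) is a strictly decreasing bijection of \((0,\infty)\) onto \((0,\infty)\). By parity, \(P(K)=C(K)-(F-K)\), and the lower boundary condition gives \(P(K)\to0\) as \(K\to-\infty\). Then \(P(K)/(F-K)\to0\), which forces \(x(K)\to+\infty\).

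The main point needing care is the monotonicity and range of \(g\), in particular the limit \(g(x)\to0\) as \(x\to-\infty\). The derivative computation above settles monotonicity. For the limit, the Mills-ratio bound \(\Phi(-y)\ge \phi(y)\,y/(1+y^2)\) for \(y>0\) gives \(0<h(x)\le\phi(x)/(1+x^2)\) for \(x<0\), which is enough.
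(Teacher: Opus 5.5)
Your proof is correct, but it takes a different route from the paper's.

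The paper argues by contradiction. Theorem~\ref{thm:bachelier-coordinate} guarantees that \(x(K)\) has a limit \(\ell\) in the extended reals. If \(\ell>-\infty\), then \(\sigma(K)=(K-F)/|x(K)|\to\infty\) while \(h(x(K))\ge h(\ell)>0\), so \(C(K)=\sigma(K)h(x(K))\) cannot tend to zero; the put side is handled symmetrically.

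You instead divide the price by the distance to the forward. You show that \(C(K)/(K-F)=g(x(K))\), where \(g(x)=h(x)/|x|\) is a strictly increasing bijection of \((-\infty,0)\) onto \((0,\infty)\). Then \(x(K)=g^{-1}\bigl(C(K)/(K-F)\bigr)\), and the limit follows by inverting \(g\).

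\begin{itemize}
\item The paper's proof is shorter and needs only positivity and monotonicity of \(h\). It relies on Theorem~\ref{thm:bachelier-coordinate} for existence of the limit.
\item Yours costs a small extra calculus lemma on \(g\). For the tail you do not need the Gordon bound: \(0<h(x)\le\phi(x)\) for \(x<0\) already gives \(g(x)\le\phi(x)/|x|\to0\).
\item In exchange, your argument is independent of the monotonicity theorem. It expresses the Bachelier coordinate pointwise as an explicit monotone function of the observable ratio \(C(K)/(K-F)\) (resp.\ \(P(K)/(F-K)\)).
\item It also makes transparent that only \(C(K)/(K-F)\to0\) is used. That already follows from boundedness of the decreasing, positive call price on \([F,\infty)\) (and of the increasing put price on \((-\infty,F]\)), without invoking the boundary limits themselves.
\end{itemize}
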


\begin{proof}
By Theorem~\ref{thm:bachelier-coordinate}, \(x(K)=(F-K)/\sigma(K)\) is
decreasing, so the two limits exist in the extended real line.

Suppose first that \(\lim_{K\to+\infty}x(K)=\ell>-\infty\).  Since
\(F-K=\sigma(K)x(K)\) and \(F-K\to-\infty\), this forces
\(\sigma(K)\to+\infty\).  From \eqref{eq:bachelier-call},
\[
  C(K)=\sigma(K)\{\phi(x(K))+x(K)\Phi(x(K))\}.
\]
For every finite lower bound on \(x\), the bracketed term is strictly positive.
Hence \(C(K)\) cannot converge to zero, contradicting the no-arbitrage boundary
condition.  Therefore the right limit is \(-\infty\).

The left limit follows similarly from the put formula.  If
\(\lim_{K\to-\infty}x(K)=\ell<+\infty\), then \(\sigma(K)\to+\infty\) and
\(P(K)=\sigma(K)\{\phi(x(K))-x(K)\Phi(-x(K))\}\) cannot converge to zero, a
contradiction.
\end{proof}

\section{The Lognormal, or Black--Scholes, Case}

This section proves the analogous monotonicity result in the lognormal, or
Black--Scholes, convention.  Here the normalized coordinate is
\[
  z(k)=\frac{k}{v(k)}, \qquad k=\log(K/F),
\]
where \(v(k)\) is the total Black--Scholes implied volatility.  The first goal
is to prove, again by a finite-strike no-arbitrage comparison, that \(k/v(k)\)
is increasing in log-moneyness.  We then relate this discrete monotonicity
statement to Fukasawa's continuous normalizing transformations,
\[
  f_1(k)=\frac{k}{v(k)}-\frac{v(k)}{2},
  \qquad
  f_2(k)=\frac{k}{v(k)}+\frac{v(k)}{2},
\]
which correspond to \(-d_1\) and \(-d_2\).  Thus the section separates the
finite market-quote result from the additional differentiability assumptions
needed for the continuous-strike Fukasawa formulation.

Let \(v>0\) denote total Black--Scholes implied volatility.  We use log-forward
moneyness
\[
  k=\log(K/F),
\]
and define the normalized coordinate
\[
  z=\frac{k}{v}.
\]
The Black--Scholes quantities are
\[
  d_1=-z+\frac{v}{2},\qquad d_2=-z-\frac{v}{2}.
\]
Equivalently,
\[
  -d_1=z-\frac{v}{2},\qquad -d_2=z+\frac{v}{2}.
\]
These two functions are exactly Fukasawa's normalizing transformations
\(f_1\) and \(f_2\).

The Black--Scholes call price, normalized by the forward, is
\begin{equation}
\label{eq:bs-call-normalized}
  c(z,v)=\Phi\!\left(-z+\frac{v}{2}\right)
  -e^{vz}\Phi\!\left(-z-\frac{v}{2}\right).
\end{equation}
The put price normalized by strike is
\begin{equation}
\label{eq:bs-put-normalized}
  \tilde p(z,v)=\frac{P(K)}{K}
  =\Phi\!\left(z+\frac{v}{2}\right)
  -e^{-vz}\Phi\!\left(z-\frac{v}{2}\right).
\end{equation}

\begin{lemma}
\label{lem:mills}
For all \(u\in\mathbb{R}\),
\[
  \phi(u)+u\Phi(u)>0,
  \qquad
  \phi(u)-u\Phi(-u)>0.
\]
\end{lemma}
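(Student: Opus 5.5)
The plan is to reuse the derivative computation from Lemma~\ref{lem:bach-monotone} and add a limit at \(-\infty\). Set \(h(u)=\phi(u)+u\Phi(u)\). Lemma~\ref{lem:bach-monotone} gives \(h'(u)=\Phi(u)>0\), using \(\phi'(u)=-u\phi(u)\). So \(h\) is strictly increasing on \(\mathbb{R}\). If I show that \(\lim_{u\to-\infty}h(u)=0\), then for every \(u\) I can pick \(w<u\) and get
\[
h(u)>h(w)\ge \inf_{t}h(t)=\lim_{t\to-\infty}h(t)=0,
\]
which is the strict inequality.

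The limit is the only step needing care, because \(u\Phi(u)\) has the indeterminate form \((-\infty)\cdot 0\). I would use the standard Gaussian tail (Mills ratio) bound. For \(u<0\),
\[
\Phi(u)=\int_{-\infty}^{u}\phi(t)\,dt\le\int_{-\infty}^{u}\frac{t}{u}\phi(t)\,dt=\frac{\phi(u)}{|u|},
\]
since \(t/u\ge 1\) when \(t\le u<0\). This gives \(|u\Phi(u)|\le\phi(u)\to0\), and together with \(\phi(u)\to0\) it yields \(h(u)\to0\).

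As a cleaner alternative, the same tail bound can be read directly as \(h(u)\ge\phi(u)-|u|\Phi(u)\ge 0\) for \(u<0\). Upgrading this to strictness still needs the monotonicity argument above, or the integral identity \(h(u)=\int_{-\infty}^{u}\Phi(t)\,dt\), which is positive. That identity follows from \(h'=\Phi\) and the limit at \(-\infty\). For \(u\ge0\), positivity is immediate, since \(\phi(u)>0\) and \(u\Phi(u)\ge0\).

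The second inequality follows from the first by symmetry. Using \(\phi(-u)=\phi(u)\),
\[
\phi(u)-u\Phi(-u)=\phi(-u)+(-u)\Phi(-u)=h(-u)>0.
\]
So no separate argument is needed.

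An alternative interpretation is also available: \(h(u)=\mathbb{E}[(Z+u)^+]\) for a standard normal \(Z\), and this is strictly positive because \(Z+u>0\) with positive probability. I would mention it as a one-line remark, but present the monotonicity-plus-limit argument as the proof, since it only uses Lemma~\ref{lem:bach-monotone}.
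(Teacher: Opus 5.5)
Your proof is correct, but it is not the route the paper takes. The paper's proof is the interpretation you relegate to a closing remark. It observes that $\phi(u)+u\Phi(u)$ and $\phi(u)-u\Phi(-u)$ are the unit-volatility Bachelier call and put prices, i.e.\ $\mathbb{E}[(u+Z)^+]$ and $\mathbb{E}[(-u-Z)^+]$. These are strictly positive because each payoff is nonnegative and is positive with positive probability.

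That route is a one-liner and handles both inequalities at once. It does, however, lean on the identity $\int_{-u}^{\infty}(u+z)\phi(z)\,dz=\phi(u)+u\Phi(u)$, which the paper never writes out.

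Your route is purely analytic. It reuses $h'=\Phi$ from Lemma~\ref{lem:bach-monotone}, supplies the limit $h(u)\to 0$ at $-\infty$ via the Gaussian tail bound, and reduces the put inequality to the call inequality by the reflection $u\mapsto -u$. It is fully self-contained and yields the extra identity $h(u)=\int_{-\infty}^{u}\Phi(t)\,dt$ (the call as an integral of digitals), at the cost of more steps.

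One correction to your side comment: strictness does not need the monotonicity step. For $t<u<0$ you have $t/u>1$ strictly and $\phi(t)>0$, so your tail estimate is already strict, $|u|\Phi(u)<\phi(u)$. That is exactly $h(u)>0$ for $u<0$. Together with the trivial case $u\ge 0$, this alone proves the first inequality.
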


\begin{proof}
The first expression is the normalized Bachelier call price with unit volatility
and normalized moneyness \(u\); the second is the corresponding normalized
Bachelier put price.  Both are strictly positive for finite \(u\).
\end{proof}

\begin{lemma}
\label{lem:bs-monotone}
The function \(c(z,v)\) is strictly decreasing in \(z\) and strictly increasing
in \(v\).  The function \(\tilde p(z,v)\) is strictly increasing in both \(z\)
and \(v\).
\end{lemma}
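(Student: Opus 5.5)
The plan is to differentiate \eqref{eq:bs-call-normalized} directly in \(z\) and in \(v\), with the other variable held fixed as the statement requires. The only tools needed are one Gaussian identity and Lemma~\ref{lem:mills}. The case of \(\tilde p\) will then reduce to the case of \(c\) by a symmetry.

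Write \(a=-z+\tfrac v2\) and \(b=-z-\tfrac v2\), so that \(c=\Phi(a)-e^{vz}\Phi(b)\). The key identity comes from \(a^2-b^2=(a-b)(a+b)=v\cdot(-2z)\), which gives \(\phi(b)=\phi(a)e^{vz}\cdot e^{-2vz}\cdot e^{vz}\) after simplification, that is,
\[
  e^{vz}\phi(b)=\phi(a).
\]

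\emph{Derivative in \(z\).} With \(v\) fixed, \(\partial_z a=\partial_z b=-1\). Hence
\[
  \partial_z c=-\phi(a)-v e^{vz}\Phi(b)+e^{vz}\phi(b)=-v\,e^{vz}\Phi(b)<0,
\]
where the last equality uses the key identity. So \(c\) is strictly decreasing in \(z\).

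\emph{Derivative in \(v\).} With \(z\) fixed, \(\partial_v a=\tfrac12\) and \(\partial_v b=-\tfrac12\). Hence
\[
  \partial_v c=\tfrac12\phi(a)-z e^{vz}\Phi(b)+\tfrac12 e^{vz}\phi(b)
  =e^{vz}\bigl\{\phi(b)-z\Phi(b)\bigr\},
\]
again by the key identity. The step that needs care is the sign of this bracket, since \(z\) may be positive. I will handle it by substituting \(-z=b+\tfrac v2\), which gives
\[
  \phi(b)-z\Phi(b)=\bigl\{\phi(b)+b\Phi(b)\bigr\}+\tfrac v2\Phi(b).
\]
The first term is strictly positive by Lemma~\ref{lem:mills}, applied with \(u=b\). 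The second term is strictly positive because \(v>0\). Therefore \(\partial_v c>0\), and \(c\) is strictly increasing in \(v\) at fixed \(z\).

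\emph{The put.} Comparing \eqref{eq:bs-call-normalized} with \eqref{eq:bs-put-normalized}, I observe that
\[
  c(-z,v)=\Phi\!\left(z+\tfrac v2\right)-e^{-vz}\Phi\!\left(z-\tfrac v2\right)=\tilde p(z,v).
\]
Since \(c\) is strictly decreasing in its first argument, \(\tilde p\) is strictly increasing in \(z\). Since \(c\) is strictly increasing in \(v\) for every fixed first argument, \(\tilde p\) is strictly increasing in \(v\).

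The only genuine obstacle is the sign of \(\partial_v c\) when \(z>0\). The rewriting of the bracket as a Bachelier-call expression plus a positive term resolves it.
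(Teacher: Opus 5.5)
Your proof is correct and follows the paper's route: the same two derivatives of $c$, simplified with the same Gaussian identity $e^{vz}\phi(b)=\phi(a)$, with the sign of $\partial_v c$ coming from Lemma~\ref{lem:mills}; your substitution $-z=b+\tfrac v2$ makes explicit a reduction the paper leaves implicit, and your observation $\tilde p(z,v)=c(-z,v)$ replaces the paper's separate computation of the put derivatives. One small slip: your intermediate expression $\phi(a)e^{vz}\cdot e^{-2vz}\cdot e^{vz}$ equals $\phi(a)$, whereas the correct relation is $\phi(b)=e^{-vz}\phi(a)$, but the identity you actually use afterwards is right, so nothing downstream is affected.
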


\begin{proof}
Differentiating \eqref{eq:bs-call-normalized} and using
\[
  \phi\!\left(-z+\frac{v}{2}\right)
  =
  e^{vz}\phi\!\left(-z-\frac{v}{2}\right)
\]
gives
\[
  \frac{\partial c}{\partial z}
  =
  -v e^{vz}\Phi\!\left(-z-\frac{v}{2}\right)<0
\]
and
\[
  \frac{\partial c}{\partial v}
  =
  e^{vz}\Bigg[
    \phi\!\left(-z-\frac{v}{2}\right)
    -z\Phi\!\left(-z-\frac{v}{2}\right)
  \Bigg]>0,
\]
where the final inequality follows from Lemma~\ref{lem:mills}.

Similarly,
\[
  \frac{\partial \tilde p}{\partial z}
  =
  v e^{-vz}\Phi\!\left(z-\frac{v}{2}\right)>0,
\]
and
\[
  \frac{\partial \tilde p}{\partial v}
  =
  e^{-vz}\Bigg[
    \phi\!\left(z-\frac{v}{2}\right)
    +z\Phi\!\left(z-\frac{v}{2}\right)
  \Bigg]>0.
\]
\end{proof}

\begin{theorem}[Monotonicity of \(k/v(k)\)]
\label{thm:k-over-v}
Let \(0<K_1<K_2\), and let \(v_i=v(k_i)\) be the corresponding total implied
volatilities, where \(k_i=\log(K_i/F)\).  Then
\[
  \frac{k_1}{v_1}<\frac{k_2}{v_2}.
\]
\end{theorem}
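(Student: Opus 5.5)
I will follow the same three-case scheme as in the proof of Theorem~\ref{thm:bachelier-coordinate}. This time I use the normalized Black--Scholes prices \eqref{eq:bs-call-normalized}--\eqref{eq:bs-put-normalized} and the monotonicity statements of Lemma~\ref{lem:bs-monotone} in place of Lemma~\ref{lem:bach-monotone}. Write \(z_i=k_i/v_i\); the goal is \(z_1<z_2\).

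\emph{Straddling case.} If \(k_1<0\le k_2\) or \(k_1\le 0<k_2\), then \(z_1\) and \(z_2\) have the required signs (one \(\le 0\), the other \(\ge 0\), with at least one strict). The inequality is then immediate, since \(v_i>0\).

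\emph{Case \(0<k_1<k_2\) (both out-of-the-money calls).} If \(v_2\le v_1\), then \(k_1/v_1<k_2/v_1\le k_2/v_2\) and we are done. So assume \(v_2>v_1\), and suppose for contradiction that \(z_1\ge z_2\). Assumption (A1) gives \(C(K_1)>C(K_2)\); I will need strictness here, which I take from the fact that quoted prices admit a Black--Scholes implied volatility, so prices lie strictly between the intrinsic bounds. Dividing by \(F\), this reads \(c(z_1,v_1)>c(z_2,v_2)\). On the other hand, Lemma~\ref{lem:bs-monotone} gives
\[
c(z_1,v_1)\le c(z_2,v_1)<c(z_2,v_2),
\]
using first decrease in \(z\) and then increase in \(v\). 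This is a contradiction. The point is that the call price is normalized by the common constant \(F\), so comparing \(C(K_1)\) with \(C(K_2)\) is legitimate directly.

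\emph{Case \(k_1<k_2<0\) (both out-of-the-money puts).} Here \(z_i<0\). If \(v_1\le v_2\), then \(|k_1|/v_1>|k_2|/v_2\), so \(z_1<z_2\). Otherwise \(v_1>v_2\). The subtle point is that the put formula \eqref{eq:bs-put-normalized} is normalized by \(K\), not by a common constant, so the bare monotonicity \(P(K_1)<P(K_2)\) is not the right comparison. Instead I use Lemma~\ref{lem:put-over-k}, which gives \(P(K_1)/K_1\le P(K_2)/K_2\), that is, \(\tilde p(z_1,v_1)\le \tilde p(z_2,v_2)\). Suppose \(z_1\ge z_2\). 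Lemma~\ref{lem:bs-monotone} then gives
\[
\tilde p(z_1,v_1)\ge \tilde p(z_2,v_1)>\tilde p(z_2,v_2),
\]
contradicting the weak inequality from Lemma~\ref{lem:put-over-k}. Because the second step is strict, weak convexity suffices here.

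The main obstacle is exactly this normalization mismatch in the put case. It is the reason Lemma~\ref{lem:put-over-k} was proved, and one must check that its weak inequality is enough, which it is since Lemma~\ref{lem:bs-monotone} supplies strictness in \(v\). A secondary point is strictness in the call case. If one wants to avoid relying on strict decrease of \(C\), I would instead argue via \(C(K)/F\) together with the strict monotonicity in \(v\) in the same way: the chain is \(c(z_1,v_1)\le c(z_2,v_1)<c(z_2,v_2)\), so the weak inequality \(C(K_1)\ge C(K_2)\) already yields a contradiction. Hence weak (A1) suffices as well. The at-the-money boundary (\(k_1=0\) or \(k_2=0\)) is absorbed into the straddling case.
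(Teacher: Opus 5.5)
Your proof is correct and is essentially the paper's argument: you use the same three-case split, compare calls through \(c(z,v)\) with Lemma~\ref{lem:bs-monotone} when \(F<K_1<K_2\), and compare \(P(K)/K\) through \(\tilde p(z,v)\) and Lemma~\ref{lem:put-over-k} when \(K_1<K_2<F\). Your refinements tighten two points the paper passes over: you send \(v_1=v_2\) to the direct branch so that strict monotonicity in \(v\) gives the contradiction from weak no-arbitrage inequalities alone (so your appeal to strict intrinsic bounds, which does not by itself give strict call monotonicity, is unnecessary), and you explicitly cover \(k_1=0\) or \(k_2=0\), whereas the paper asserts \(C(K_1)>C(K_2)\) strictly and treats only \(K_1<F<K_2\) in the straddling case.
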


\begin{proof}
Let \(z_i=k_i/v_i\).  We prove \(z_1<z_2\).

If \(K_1<F<K_2\), then \(k_1<0<k_2\), so the claim is immediate.

Suppose \(F<K_1<K_2\), so \(k_1,k_2>0\).  If \(v_2<v_1\), then
\(z_1<z_2\) follows directly from \(k_1<k_2\).  If \(v_2\ge v_1\), no-arbitrage
gives \(C(K_1)>C(K_2)\).  Since
\[
  C(K_i)=F\,c(z_i,v_i),
\]
Lemma~\ref{lem:bs-monotone} implies that \(z_1\ge z_2\) would lead to
\[
  c(z_1,v_1)\le c(z_2,v_2),
\]
contradicting \(C(K_1)>C(K_2)\).  Hence \(z_1<z_2\).

Suppose finally that \(K_1<K_2<F\), so \(k_1,k_2<0\).  If \(v_1<v_2\), the
claim is immediate.  If \(v_1\ge v_2\), use the normalized put price.  By
Lemma~\ref{lem:put-over-k}, \(P(K)/K\) is increasing in \(K\).  Since
\[
  \frac{P(K_i)}{K_i}=\tilde p(z_i,v_i),
\]
Lemma~\ref{lem:bs-monotone} implies that \(z_1\ge z_2\) would give
\[
  \tilde p(z_1,v_1)\ge \tilde p(z_2,v_2),
\]
contradicting the monotonicity of \(P(K)/K\).  Therefore \(z_1<z_2\).
\end{proof}

\begin{remark}[Finite-strike nature of Theorem~\ref{thm:k-over-v}]
The proof of Theorem~\ref{thm:k-over-v} is the discrete counterpart of the
usual continuous-strike no-arbitrage reasoning.  For any two quoted strikes
\(K_1<K_2\), the conclusion follows from the two quoted option prices, their
put--call-parity relation, and the finite-grid monotonicity/convexity
inequalities.  No strike derivative, digital price, density, or interpolated
smile is used.  This is the main distinction from the continuous normalizing-
transformation analysis of Fukasawa.
\end{remark}

\begin{lemma}[Digital bounds]
\label{lem:digital-bounds}
Assume, in addition, that the normalized call price
\[
  c(k)=\frac{C(Fe^k)}{F}
\]
is differentiable in \(k\).  Then
\[
  -e^k\le c'(k)\le 0 .
\]
\end{lemma}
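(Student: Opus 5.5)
The plan is to work with the strike-variable call price \(C(K)\) and convert bounds on \(C'(K)\) into bounds on \(c'(k)\) by the chain rule. Since \(c(k)=C(Fe^k)/F\) is differentiable in \(k\) and \(K=Fe^k\) is a smooth bijection from \(\mathbb{R}\) onto \((0,\infty)\), the function \(C\) is differentiable in \(K\) on \((0,\infty)\), and
\[
  c'(k)=\frac{1}{F}\,C'(Fe^k)\cdot Fe^k=e^k\,C'(K).
\]
So the claim \(-e^k\le c'(k)\le 0\) is equivalent to \(-1\le C'(K)\le 0\) for every \(K>0\).

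The upper bound \(C'(K)\le 0\) comes directly from (A1): a decreasing differentiable function has nonpositive derivative. For the lower bound I use put--call parity (A3), \(C(K)=P(K)+F-K\). Then \(P\) is differentiable wherever \(C\) is, and
\[
  C'(K)=P'(K)-1 .
\]
By (A2) \(P\) is increasing, so \(P'(K)\ge 0\), which gives \(C'(K)\ge -1\).

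The argument can also be run with finite differences only, to match the spirit of the paper. For \(K_1<K_2\), (A1) gives \(C(K_2)-C(K_1)\le 0\). Parity together with \(P(K_2)\ge P(K_1)\) gives \(C(K_2)-C(K_1)\ge -(K_2-K_1)\). Dividing by \(K_2-K_1\) and letting \(K_2\downarrow K_1\) yields \(-1\le C'(K_1)\le 0\). Multiplying by \(e^k>0\) then gives the stated bounds.

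No step is a genuine obstacle. The only point needing care is justifying that differentiability in \(k\) transfers to differentiability in \(K\), which holds because \(k\mapsto Fe^k\) is a diffeomorphism. Convexity is not needed at all; only the monotonicity of \(C\) and \(P\) and parity are used.
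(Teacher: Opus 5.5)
Your argument is correct and has the same skeleton as the paper's proof. The chain rule $c'(k)=e^kC_K(K)$ reduces the claim to $-1\le C_K\le 0$, and the upper bound is just (A1).

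You differ from the paper on the lower bound. The paper attributes it to ``convexity and monotonicity of the call price.'' Read literally, those properties do not suffice, even with $C(0)=F$ and $C(K)\to 0$: the function $\max(F-2K,0)$ is convex and decreasing, yet has slope $-2$ near $0$. Convexity gives the bound only when combined with the intrinsic-value inequality $C(K)\ge F-K$, which is $P\ge 0$ via parity. The argument then runs $C_K(K)\ge (C(K)-C(0))/K\ge -1$, and so also uses the boundary value $C(0)=F$.

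You instead read off $C_K=P_K-1\ge -1$ pointwise from parity (A3) and the monotonicity of $P$ in (A2). Convexity and boundary conditions are never used, and the hypotheses actually doing the work are explicit. Your finite-difference version, $-(K_2-K_1)\le C(K_2)-C(K_1)\le 0$, also fits the paper's two-strike philosophy better than its own one-line justification does.
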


\begin{proof}
Since \(K=Fe^k\), we have \(c'(k)=e^k C_K(K)\) whenever \(C_K\) exists.
Convexity and monotonicity of the call price imply \(-1\le C_K(K)\le 0\).
The result follows.
\end{proof}

\begin{theorem}[Continuous-strike Fukasawa normalizing transformations]
\label{thm:fukasawa}
For comparison with Fukasawa's normalizing transformations, assume in this
subsection that the lognormal implied total volatility \(v(k)\) is
differentiable as a function of log-moneyness.  Under the no-arbitrage
assumptions above, the functions
\[
  f_1(k)=\frac{k}{v(k)}-\frac{v(k)}{2},
  \qquad
  f_2(k)=\frac{k}{v(k)}+\frac{v(k)}{2}
\]
are increasing in \(k\).  Equivalently, \(d_1(k)\) and \(d_2(k)\) are decreasing
functions of strike.
\end{theorem}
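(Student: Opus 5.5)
Differentiate the normalized call price along the smile and use the digital bounds of Lemma~\ref{lem:digital-bounds}. Write \(c(k)=c(z(k),v(k))\) with \(z(k)=k/v(k)\). In the \((k,v)\) parametrization,
\[
  c(k)=\Phi\!\left(d_1(k)\right)-e^{k}\Phi\!\left(d_2(k)\right),
  \qquad d_{1,2}=-\tfrac{k}{v}\pm\tfrac{v}{2}.
\]
The standard identity \(\phi(d_1)=e^k\phi(d_2)\) (the same identity used in the proof of Lemma~\ref{lem:bs-monotone}) makes the terms involving \(\phi\) collapse when we differentiate. This gives
\[
  c'(k)=-e^k\Phi(d_2)+\phi(d_1)\,d_1'(k)-e^k\phi(d_2)\,d_2'(k)
  =-e^k\Phi(d_2)+\phi(d_1)\,v'(k),
\]
because \(d_1'-d_2'=v'\). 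Using \(d_1'=-f_1'\) and \(d_2'=-f_2'\), the same derivative can be written in either of two forms:
\[
  c'(k)=-e^k\Phi(d_2)-\phi(d_1)\,f_1'(k)+\phi(d_1)\,f_2'(k)\ \text{ type expressions.}
\]
More usefully, I will carry out the chain rule separately at the level of each transformation. Differentiability of \(v\) gives differentiability of \(c(k)\), so Lemma~\ref{lem:digital-bounds} applies.

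\textbf{Bound for \(f_2\).} I will use the upper digital bound \(c'(k)\le 0\). Rewrite \(c'(k)\) as
\[
  c'(k)=-\Phi(d_1)+\phi(d_1)\bigl(-f_1'(k)\bigr)\cdot(\ldots).
\]
Rather than force this form, I will take the cleaner route through the right-derivative identity \(C_K=-\Phi(d_2)+\phi(d_2)\,\partial_K\sigma\)-type formulas. Concretely, with \(e^{-k}c'(k)=C_K\):
\[
  C_K=-\Phi(d_2)+\phi(d_2)\,v'(k).
\]
Since \(v'=f_2'-z'\) and \(d_2'=-z'-v'/2\), I will express \(v'\) and \(z'\) through \(f_1',f_2'\):
\[
  z'=\tfrac12(f_1'+f_2'),\qquad v'=f_2'-f_1'.
\]

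\textbf{Bound for \(f_1\).} This comes from \(C_K\ge -1\), via the put: \(P_K=1+C_K\ge0\). Put--call parity gives
\[
  P_K=\Phi(-d_2)+\phi(d_2)\,v'.
\]

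Now fix \(k\) and argue by contradiction. Suppose \(f_2'(k)<0\), that is, \(d_2\) is increasing at \(k\). The digital price equals \(\Phi(d_2)-\phi(d_2)v'\), and this must lie in \([0,1]\). These constraints alone do not immediately force the sign, so I expect this to be the main obstacle. To handle it, I will rely on the known Fukasawa observation: \(-C_K=\Phi(d_2)-\phi(d_2)v'\) is the risk-neutral probability \(Q(S_T>K)\), and the transformed quantity is precisely
\[
  Q(S_T>K)=\Phi\bigl(-f_2(k)\bigr)\ \text{ only when }\ v' = 0.
\]
So the pointwise derivative bounds are insufficient. Instead I will combine them with Theorem~\ref{thm:k-over-v}, which already gives \(z'\ge0\), that is, \(f_1'+f_2'\ge0\).

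Then the remaining inequality to prove is \(|v'|/2\le z'\), which is equivalent to both \(f_1'\ge 0\) and \(f_2'\ge 0\). From \(0\le -C_K\le1\),
\[
  \phi(d_2)\,v'\le \Phi(d_2),\qquad -\phi(d_2)\,v'\le\Phi(-d_2).
\]
These bound \(v'\) but do not involve \(z'\). Therefore, if the pointwise route fails, I will revert to a finite-difference argument in the style of Theorem~\ref{thm:k-over-v}. For \(f_2\) I will compare digital-type bounds: \(C(K_1)-C(K_2)\le K_2-K_1\) is used together with monotonicity of \(c\) in the coordinates \((-d_2,v)\), i.e.\ I will reparametrize \(c\) by \((f_2,v)\) and check the signs of the partial derivatives, mirroring Lemma~\ref{lem:bs-monotone}. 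This reparametrization and sign check is the step I expect to be the crux.
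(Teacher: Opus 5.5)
Your proposal has a genuine gap: it never establishes either inequality. You correctly obtain $C_K=\phi(d_2)v'-\Phi(d_2)$, which is also the paper's starting point (with $d_2=-f_2$). You then abandon the pointwise route because the digital bounds ``bound $v'$ but do not involve $z'$,'' and you leave the finite-difference fallback as an unchecked crux. The missing observation is that $z'$ is not a free quantity. From $z=k/v$ one gets $z'=(1-zv')/v$, and hence
\[
  f_1'(k)=\frac{1-f_2(k)\,v'(k)}{v(k)},\qquad
  f_2'(k)=\frac{1-f_1(k)\,v'(k)}{v(k)}.
\]
Your target $\lvert v'\rvert/2\le z'$ is therefore equivalent to $f_1v'\le 1$ and $f_2v'\le 1$, which is a condition on $v'$ alone. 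This is the paper's route. It rewrites $-1\le C_K\le 0$ as $-\Phi(f_2)/\phi(f_2)\le v'\le \Phi(-f_2)/\phi(f_2)$ and finishes with Mills' inequality $x\Phi(-x)<\phi(x)$ for $x>0$. For $f_1$ this works: use the upper bound when $f_2\ge 0$ and the lower bound when $f_2<0$. For $f_2$ it works when $f_1>0$, using the upper bound together with $f_1<f_2$.

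Your suspicion that $0\le -C_K\le 1$ alone cannot force $f_2'\ge 0$ is nonetheless correct in one case. The paper's proof calls this case ``immediate,'' but it is not. When $f_1<0$ and $v'<0$, the lower bound $v'\ge-\Phi(f_2)/\phi(f_2)$ does not yield $f_1v'<1$. For example, $k=0$, $v=2$, $v'=-2$ gives $f_1=-1$, $f_2=1$ and $C_K\approx-0.64\in[-1,0]$, yet $f_2'=(1+v')/2<0$. The extra input is the derivative form of Lemma~\ref{lem:put-over-k}. Differentiating $P(K)/K=\Phi(f_2)-e^{-k}\Phi(f_1)$ in $k$ gives $e^{-k}\{\Phi(f_1)+\phi(f_1)v'\}\ge 0$, so in that case $f_1v'\le \lvert f_1\rvert\Phi(-\lvert f_1\rvert)/\phi(f_1)<1$.

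The same lemma is what your finite-difference fallback needs. There, $C(K_1)-C(K_2)\le K_2-K_1$ is the wrong inequality: at fixed $f_2$ the normalized put $P/F$ has $v$-partial $e^k\{\phi(f_2)+f_1\Phi(f_2)\}$, which can be negative. Argue instead as follows.
\begin{itemize}
\item If $v_2\ge v_1$, then $f_2=z+v/2$ increases by Theorem~\ref{thm:k-over-v}.
\item If $v_2<v_1$, write $P(K)/K=\Phi(f_2)-e^{-k}\Phi(f_2-v)$ with $k=vf_2-v^2/2$. Its partials $ve^{-k}\Phi(f_1)$ and $e^{-k}\{\phi(f_1)+f_1\Phi(f_1)\}$ are both positive, so $f_2(k_1)\ge f_2(k_2)$ would contradict Lemma~\ref{lem:put-over-k}.
\end{itemize}
Symmetrically, $f_1$ follows from Theorem~\ref{thm:k-over-v} when $v_2\le v_1$. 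When $v_2>v_1$ it follows from call monotonicity, using $\partial_{f_1}c=-ve^k\Phi(-f_2)<0$ and $\partial_v c=e^k\{\phi(f_2)-f_2\Phi(-f_2)\}>0$ in $(f_1,v)$ coordinates. Carried out, this would even remove the differentiability assumption on $v$, but as written none of these sign checks is done.
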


\begin{proof}
Write
\[
  f_1(k)=\frac{k}{v(k)}-\frac{v(k)}{2},\qquad
  f_2(k)=\frac{k}{v(k)}+\frac{v(k)}{2},
\]
so that \(f_2-f_1=v\) and \(f_2^2-f_1^2=2k\).  Hence
\[
  \phi(f_1)=e^k\phi(f_2).
\]
The normalized Black--Scholes call price is
\[
  c(k)=\Phi(-f_1(k))-e^k\Phi(-f_2(k)).
\]
Differentiating and using the identity above gives
\[
  c'(k)=e^k\{\phi(f_2(k))v'(k)-\Phi(-f_2(k))\}.
\]
By Lemma~\ref{lem:digital-bounds},
\[
  -1\le \phi(f_2(k))v'(k)-\Phi(-f_2(k))\le 0,
\]
or equivalently
\[
  -\frac{\Phi(f_2(k))}{\phi(f_2(k))}
  \le v'(k)\le
  \frac{\Phi(-f_2(k))}{\phi(f_2(k))}.
\]
Now
\[
  f_2'(k)=\frac{1-f_1(k)v'(k)}{v(k)},\qquad
  f_1'(k)=\frac{1-f_2(k)v'(k)}{v(k)}.
\]
If \(f_1(k)\le 0\), the lower bound on \(v'\) immediately gives
\(1-f_1v'>0\).  If \(f_1(k)>0\), then \(f_2(k)>f_1(k)>0\), and the upper bound
on \(v'\) gives
\[
  f_1(k)v'(k)
  \le
  f_1(k)\frac{\Phi(-f_2(k))}{\phi(f_2(k))}
  <
  f_2(k)\frac{\Phi(-f_2(k))}{\phi(f_2(k))}
  <1,
\]
where the last inequality is the usual Mills-ratio inequality
\(x\Phi(-x)<\phi(x)\) for \(x>0\).  Therefore \(f_2'(k)>0\).

The proof for \(f_1\) is analogous.  If \(f_2(k)\ge 0\), the upper bound on
\(v'\) and the inequality \(x\Phi(-x)<\phi(x)\) imply \(f_2(k)v'(k)<1\).  If
\(f_2(k)<0\), the lower bound on \(v'\) gives
\[
  f_2(k)v'(k)
  \le
  -f_2(k)\frac{\Phi(f_2(k))}{\phi(f_2(k))}
  <1,
\]
which is again the Mills-ratio inequality applied to \(x=-f_2(k)>0\).  Hence
\(f_1'(k)>0\).
\end{proof}

\begin{remark}[Relation to Fukasawa and to the discrete proof]
Fukasawa~\cite{Fukasawa2012} is the foundational reference for the
normalizing transformations \(f_1=-d_1\) and \(f_2=-d_2\).  His analysis treats
the implied volatility smile as a continuous object and uses regularity of the
strike-price or implied-volatility functions in order to obtain skew bounds and
payoff-representation formulas.  The differentiable proof above is included
only to connect the notation of this paper to that continuous-strike framework.

The independent contribution of the present note is different.  Theorems
\ref{thm:bachelier-coordinate} and \ref{thm:k-over-v} are proved directly from
finite-strike comparisons.  They apply to the market situation in which one
observes only a discrete list of strikes and option prices.  In that setting,
there may be no meaningful derivative of the quoted smile, and even continuity
between strikes is an interpolation choice rather than market data.  The
monotonicity of the normalized coordinate is therefore a consequence of discrete
static no-arbitrage itself, not of smoothness of a continuously specified smile.
\end{remark}

\section{A Model-Free Normal-Variance Identity}

This section proves the second main contribution of the paper.  Fukasawa's
normalizing transformation gives a lognormal model-free variance identity: after
passing to the normalized Black coordinate, remaining variance is represented as
a standard-normal weighted integral of implied variance.  We now prove the
Bachelier analogue.  The result is model-free in the same sense: it depends only
on the option-implied terminal law and not on a particular diffusion model for
volatility.

We now return to the normal setting.  Let the underlying be a martingale
satisfying
\[
  dS_t=\sigma_t\,dW_t,
\]
where \(\sigma_t\) may be stochastic.  Then
\[
  d(S_t-F)^2=2(S_t-F)\sigma_t\,dW_t+\sigma_t^2\,dt,
\]
and hence
\begin{equation}
\label{eq:normal-integrated-variance}
  \mathbb{E}\!\left[(S_T-F)^2\right]
  =
  \mathbb{E}\!\left[\int_0^T\sigma_t^2\,dt\right].
\end{equation}
Thus the expected integrated normal variance is model-free once the
risk-neutral distribution of \(S_T\) is known.

Assume for this section that the call price is twice differentiable in strike
and that boundary terms vanish.  Then the Breeden--Litzenberger identity gives
\[
  C_{KK}(K)=q(K),
\]
where \(q\) is the risk-neutral density of \(S_T\).  Therefore
\begin{equation}
\label{eq:second-moment-density}
  \mathbb{E}\!\left[(S_T-F)^2\right]
  =
  \int_{\mathbb{R}}(K-F)^2 C_{KK}(K)\,dK.
\end{equation}

Let \(\sigma_N(K)\) be the Bachelier implied volatility and set
\[
  x(K)=\frac{F-K}{\sigma_N(K)}.
\]
By Theorem~\ref{thm:bachelier-coordinate}, \(x(K)\) is strictly decreasing.
Assume it maps the real strike line onto \(\mathbb{R}\), as in
Proposition~\ref{prop:range-normal}.  We may therefore write
\(\sigma_N(x)\) for the implied volatility expressed in the normalized
coordinate.

\begin{theorem}[Normal variance in Bachelier-implied coordinates]
\label{thm:normal-variance}
Under the regularity and boundary assumptions stated above,
\[
  \mathbb{E}\!\left[(S_T-F)^2\right]
  =
  \int_{\mathbb{R}}\sigma_N(x)^2\phi(x)\,dx.
\]
\end{theorem}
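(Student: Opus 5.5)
The plan is to pass from the second moment to a strike integral of out-of-the-money option prices, and then change variables from \(K\) to the Bachelier coordinate \(x\) using an exact-differential identity. First, the pointwise identity
\[
  (s-F)^2 = 2\int_F^\infty (s-K)^+\,dK + 2\int_{-\infty}^F (K-s)^+\,dK
\]
together with Fubini gives
\[
  \mathbb{E}[(S_T-F)^2]=2\int_F^\infty C(K)\,dK+2\int_{-\infty}^F P(K)\,dK .
\]
This is equivalent to \eqref{eq:second-moment-density} after integrating by parts twice, with the boundary terms vanishing by assumption. Since \(C\) is differentiable in \(K\) and the Bachelier vega \(\sigma\)-derivative of \(\sigma h(x)\) equals \(\phi(x)>0\), the implicit function theorem makes \(\sigma_N(K)\) differentiable. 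Hence \(x(K)\) is differentiable, with \(\sigma_N x'(K)=-1-\sigma_N'(K)\,x\). By Theorem~\ref{thm:bachelier-coordinate} and Proposition~\ref{prop:range-normal}, \(x\) is a decreasing bijection \(\mathbb{R}\to\mathbb{R}\) with \(x(F)=0\).

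The key step is an exact-differential identity. Set
\[
  \psi(x)=-\{x^2\Phi(x)+x\phi(x)\},\qquad \bar\psi(x)=-\{x^2\Phi(-x)-x\phi(x)\}.
\]
These satisfy \(\psi'=-(\phi+2x\Phi)\) and \(2\psi-x\psi'=-x\phi\), with analogous relations for \(\bar\psi\). Using \(C=\sigma_N h(x)\) from \eqref{eq:bachelier-call} and the formula for \(x'\), a direct computation should give
\[
  2C(K)=-\sigma_N(K)^2\phi(x(K))\,x'(K)+\frac{d}{dK}\Big[\sigma_N(K)^2\psi(x(K))\Big].
\]
The terms involving \(\sigma_N'\) cancel precisely because of the two ODE relations for \(\psi\). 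The same computation with \eqref{eq:bachelier-put} gives the put analogue with \(\bar\psi\).

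Next, integrate the call identity over \((F,\infty)\) and the put identity over \((-\infty,F)\), and substitute \(x=x(K)\) in the first term. The call piece becomes \(\int_{-\infty}^0\sigma_N(x)^2\phi(x)\,dx\) and the put piece becomes \(\int_0^\infty\sigma_N(x)^2\phi(x)\,dx\), so together they give the full integral. At \(K=F\) we have \(x=0\), so \(\sigma_N^2\psi(0)=\sigma_N^2\bar\psi(0)=0\) and the boundary terms at \(F\) vanish.

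The main obstacle is the boundary at \(\pm\infty\). Since \(\sigma_N^2x^2=(K-F)^2\), the call-side term is
\[
  \sigma_N^2\psi=-(K-F)^2\Phi(x)+(K-F)\sigma_N\phi(x),
\]
and this must tend to \(0\) as \(K\to\infty\), with the mirror statement for the put side. I would try to derive this from the finite second moment. Proposition~\ref{prop:range-normal} gives \(x\to-\infty\). The identity \(-C_K=\Phi(x)-\sigma_N'\phi(x)\), i.e. the digital price \(\mathbb{P}(S_T>K)\), suggests that \((K-F)^2\Phi(x)\) is comparable to \((K-F)^2\mathbb{P}(S_T>K)\to0\). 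This comparison may not close without control on \(\sigma_N'\), however. In that case I would fall back on the paper's standing hypothesis that boundary terms vanish, stating it explicitly as \(\sigma_N^2\psi(x)\to0\) and \(\sigma_N^2\bar\psi(x)\to0\) at the ends, together with the tail bounds \((K-F)^2C(K)\to0\) needed to justify the Fubini step.
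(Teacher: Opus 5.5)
Your argument is correct in substance, and it takes a different route from the paper.

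\textbf{Comparison with the paper.} The paper starts from $\int_{\mathbb{R}}(K-F)^2C_{KK}\,dK$ and integrates by parts once over the whole line. It then asserts that $-2(K-F)C_K$ reduces pointwise to $-\sigma_N^2\phi(x)x'$. You instead start from the out-of-the-money strip $2\int_F^\infty C\,dK+2\int_{-\infty}^F P\,dK$. That step is pure Tonelli, since the integrands are nonnegative, so the tail bound you list for it is unnecessary. You then keep an explicit exact derivative, and that term is exactly what the paper's simplification drops. Since $\sigma_N^2\psi(x(K))=(K-F)C(K)$, your identity is equivalent to $-\sigma_N^2\phi(x)x'=C-(K-F)C_K$, that is,
\[
  -2(K-F)C_K=-\sigma_N^2\phi(x)x'-\frac{d}{dK}\big[(K-F)C(K)\big].
\]
For constant $\sigma_N$ at $K=F$, the left side is $0$ while $-\sigma_N^2\phi(x)x'=\sigma_N\phi(0)$, so the paper's pointwise claim fails. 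Moreover, a call-only integration by parts on all of $\mathbb{R}$ cannot have vanishing boundary terms, because $C_K\to-1$ as $K\to-\infty$. Splitting at $F$ and using puts on the left, as you do, is what makes the discarded term integrate to zero.

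What each route buys: yours is complete under weaker hypotheses, namely $C\in C^1$ for the implicit-function step (which also justifies the differentiability of $\sigma_N$ that the paper uses tacitly) and $\mathbb{E}[(S_T-F)^2]<\infty$, with no density or $C_{KK}$ needed. The paper's route is shorter and tied to Breeden--Litzenberger, but to be correct it needs the same split at $F$ and one more integration by parts.

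\textbf{Two repairs to your write-up.}

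First, $\bar\psi$ has the wrong sign. The put identity $2P=-\sigma_N^2\phi(x)x'+\frac{d}{dK}[\sigma_N^2\bar\psi(x)]$ holds with
\[
  \bar\psi(x)=x^2\Phi(-x)-x\phi(x)=\psi(x)+x^2 .
\]
This follows from $2P=2C+\frac{d}{dK}[(K-F)^2]$ together with $(K-F)^2=\sigma_N^2x^2$. This $\bar\psi$ satisfies $2\bar\psi-x\bar\psi'=-x\phi$; your choice is its negative.

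Second, the boundary step at $\pm\infty$ closes with no control on $\sigma_N'$, provided you do not split the boundary term into its two pieces. Identically, $\sigma_N^2\psi(x(K))=(K-F)C(K)$ and $\sigma_N^2\bar\psi(x(K))=(K-F)P(K)$.
\begin{itemize}
\item Since $C$ is decreasing, $\tfrac12(R-F)C(R)\le\int_{(F+R)/2}^{R}C\,dK$. This tends to $0$ as $R\to\infty$, because $\int_F^\infty C\,dK<\infty$ by your first step.
\item Symmetrically, since $P$ is increasing, $(F-R)P(R)\to0$ as $R\to-\infty$.
\end{itemize}
So no boundary hypothesis beyond a finite second moment is needed.
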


\begin{proof}
The Bachelier call price can be written as
\[
  C(K)=\sigma_N(K)\{\phi(x(K))+x(K)\Phi(x(K))\}.
\]
Since \(F-K=\sigma_N(K)x(K)\), differentiation gives
\[
  \sigma_N'(K)x(K)+\sigma_N(K)x'(K)=-1.
\]
A direct calculation then yields
\[
  C_K(K)=\sigma_N'(K)\phi(x(K))-\Phi(x(K)).
\]
Using integration by parts in \eqref{eq:second-moment-density},
\[
  \int_{\mathbb{R}}(K-F)^2C_{KK}(K)\,dK
  =
  -2\int_{\mathbb{R}}(K-F)C_K(K)\,dK.
\]
Substituting the expression for \(C_K\), using \(K-F=-\sigma_N x\), and then
using \(\sigma_N'x+\sigma_N x'=-1\), the integrand simplifies to
\[
  -\sigma_N(K)^2\phi(x(K))x'(K).
\]
Therefore
\[
  \mathbb{E}\!\left[(S_T-F)^2\right]
  =
  -\int_{\mathbb{R}}\sigma_N(K)^2\phi(x(K))x'(K)\,dK.
\]
Finally, since \(x(K)\) is decreasing from \(+\infty\) to \(-\infty\), the
change of variables \(x=x(K)\) gives
\[
  -\int_{\mathbb{R}}\sigma_N(K)^2\phi(x(K))x'(K)\,dK
  =
  \int_{\mathbb{R}}\sigma_N(x)^2\phi(x)\,dx.
\]
\end{proof}

\begin{remark}[Comparison with Fukasawa's lognormal identity]
Theorem~\ref{thm:normal-variance} is the normal-model analogue of Fukasawa's
lognormal variance identity.  In Fukasawa's setting, the normalized Black
coordinates transform the remaining variance into an integral of Black implied
variance against the standard normal density.  Here the Bachelier coordinate
\(x=(F-K)/\sigma_N(K)\) plays the same role, and the remaining normal variance
is represented as
\[
  \int_{\mathbb{R}} \sigma_N(x)^2\phi(x)\,dx .
\]

\end{remark}

\section{Conclusion}

The main message is that normalized implied-volatility coordinates have both a
discrete no-arbitrage origin and a model-free variance interpretation.  Unlike
continuous-strike proofs based on digital bounds or derivatives of the
implied-volatility smile, the central monotonicity argument here is discrete: it
compares two quoted strikes and uses only finite-grid monotonicity, convexity,
and put--call parity.  In the Black--Scholes case, this gives monotonicity of
the normalized coordinate \(k/v(k)\), which is the midpoint of Fukasawa's two
normalizing transformations \(f_1=-d_1\) and \(f_2=-d_2\).  In the Bachelier
case, the same argument yields monotonicity of \((F-K)/\sigma_N(K)\).  The
paper's second main result shows that this Bachelier coordinate also gives a
model-free normal-variance identity, paralleling Fukasawa's lognormal variance
identity and expressing expected remaining normal variance as a standard-normal
weighted integral of squared Bachelier implied volatility.

The results are useful for three reasons.  First, they provide simple
arbitrage-diagnostic checks for implied-volatility interpolation.  Second, they
clarify why the \(d_1,d_2\) coordinates are natural for organizing a smile.
Third, they show that the normal and lognormal conventions share the same
underlying no-arbitrage mechanism.

\end{document}